 \theoremstyle{definition}
  \theoremstyle{remark}
   \numberwithin{equation}{section}
\newtheorem{teo}{Theorem}
\newtheorem{lema}{Lemma}
\begin{document}

\title[]{Uniqueness of the Gibbs state of the BEG model in the disordered region of parameters}

\author[Paulo C. Lima]{Paulo C. Lima}

\address{Departamento de Matemática
\\
Universidade Federal de Minas Gerais\\
Belo Horizonte - MG\\
Brazil}

\email{lima@mat.ufmg.br}

\maketitle

\begin{abstract} We show that the $d$-dimen\-sion\-al Blume-Emery-Griffiths model has a unique Gibbs state, for all temperature,  in some portion of   disordered region of parameters, ruling out the possibility of a reentrant behavior in the same.
\end{abstract}

\section{Introduction}

The Blume-Emery-Griffiths (BEG) model is a spin-one system, introduced in the 1970s in the context of superfluidity and phase transition of  $\,{}^3He-{}^4He$ mixtures \cite{bib:rbeg} and since then  it has attracted a lot of attention and has been extended to other applications such as ternary fluids \cite{bib:muk,bib:fur}, phase transitions in $UO_2$ \cite{bib:grif} and $DyVO_4$ \cite{bib:blume},  phase changes in microemulsion \cite{bib:ss} and solid-liquid-gas systems \cite{bib:lss}.

The formal Hamiltonian of  $BEG$ model with zero magnetic field has the following form:
\begin{align*}{\mathcal H}(\sigma)=-\sum_{\langle i, j\rangle}(\sigma_i\sigma_j+y\sigma_i^2\sigma_j^2+x(\sigma_i^2+\sigma_j^2)),\end{align*}
where $\langle i, j\rangle$ is an unordered pair of  nearest neighbors in $\mathbb{Z}^d$, $\sigma_i\in \{-1,0,+1\}$ and $x,y\in\mathbb{R}$.

To understand the low temperature properties of the model it is important to know its low energy configurations and this is done in \cite{bib:pgm}, where the $xy$-plane is decomposed into three  regions (according to the lowest spin pair energies),  namely,
\begin{eqnarray*}
{\mathcal F}&=&\{(x,y):1+2x+y>0 \mbox{ and } 1+x+y>0\}\\
{\mathcal D}&=&\{(x,y):1+2x+y<0 \mbox{ and } x<0\}\\
{\mathcal A}&=&\{(x,y):1+x+y<0 \mbox{ and } x>0\},
\end{eqnarray*}
 called ferromagnetic, disordered and antiquadrupolar. In these regions the spin pairs with lowest energies  are $\{++,--\}$, $\{00\}$ and $\{0+,0-\}$, respectively.  In particular, for $(x,y)\in {\mathcal D}$ the constant configuration $\omega_i=0$, for all $i$, is the only ground state. For $(x,y)\in{\mathcal F}$ there are two ground states, namely, the constant configurations $\omega_i=+1$, for all $i$, and $\omega_i=-1$, for all $i$, respectively. For $(x,y)\in {\mathcal A}$ the model has infinitely many ground states, namely,  $\omega_i=0$ for $i\in L_e$, where $L_e$ is the even sublattice of $\mathbb{Z}^d$, and $\omega_i=\pm 1$ for $i\in L_o$, where $L_o$ is the odd sublattice of $\mathbb{Z}^d$, as well as $\omega_i=0$ for $i\in L_o$ and $\omega_i=\pm 1$ for $i\in L_e$.

      A discussion of the low temperature properties of  BEG model is found in \cite{bib:eu2018}. Since only the regions ${\mathcal A}$ and ${\mathcal D}$ are directly related to this work, we will make some comments about them.

     For $(x,y)\in {\mathcal A}$, even though we have infinitely many ground state configurations, they split into two equivalence classes and the low temperature properties of the model can be analyzed as in the extension of the Pirogov-Sinai theory given in \cite{bib:gsuto}, where ground state configurations are replaced with equivalence classes.  In \cite{bib:pgm}, using polymer expansion and analyticity techniques, the two (staggered) pure  states of the BEG model with parameters in the region ${\mathcal A}$ are constructed. The corresponding  phase, where these two pure states  coexist, is the  \emph{antiquadrupolar phase}.

     For each $(x,y)\in {\mathcal D}$,  from the usual Pirogov-Sinai theory \cite{bib:sinai,bib:sv} and high temperature expansions \cite{bib:sv}, at low and at high temperatures, respectively,  the model has a unique Gibbs state,  the \emph{disordered phase}.
     We may wonder whether or not we have a unique Gibbs state for all temperature.  This not clear at all. In fact, the phase diagrams (g), (h) and (i) given in  Figure $1$ of  \cite{bib:berker1} corresponding to the values $y=-1.5$, $-3.0$ and $-3.5$, respectively,  indicate the presence of reentrance (namely, for each one of these values of $y$ and  for $x<0$ and close to $0$ fixed,  as we increase the temperature, we go from the disordered phase to the staggered one and again to the disordered one), although this is not confirmed by the numerical renormalization group calculations (see \cite{bib:osorio,bib:branco} and references therein).   This makes the analysis of the BEG model with parameters  $(x,y)\in {\mathcal D}$ important and motivates the present work.

 Our analysis of the uniqueness of the Gibbs state in the region ${\mathcal D}$ started in  \cite{bib:eu2018}, where its Theorem $1$ combined with the FKG inequality  implied  the  uniqueness of the Gibbs state,  for all temperatures, for parameters $(x,y)$ given in Figure $1$ of this reference. However, the use of the FKG inequality is a big restriction since for the BEG model  this inequality holds only in the region $|y|\leq 1$.

Using  the Dobrushin criterion \cite{bib:dc}, we extend the uniqueness results for all temperature of \cite{bib:eu2018} to a bigger region of parameters,  ruling out the possibility of a  reentrant behavior in this region.    As a simple consequence of our results, the Blume-Capel model, which is a special case of the BEG model when $y=0$, has a unique Gibbs state for all temperature, if $x<x_c(d)$, where $x_c(2)\approx -3.69658$ and $x_c(3)\approx -3.77794$.

 This article is organized as follows: in Section \ref{secnot} we introduce some notation, we give the Dobrushin uniqueness criterion and we state our main results which are Theorems  \ref{teo0} and  \ref{teo1}.  In Section \ref{sec2} we  prove Lemma \ref{lem17.x} which provides  upper bounds  on the total variation distances, which will be used in our estimates in Lemmas \ref{lemx1} and  \ref{lemdob16.1}.  In Section \ref{secunifbc} we prove  Lemmas \ref{lemx1} and  \ref{lemdob16.1}.  Finally, in Section \ref{seccr}, we make some concluding remarks.

\section{Notation and main result} \label{secnot}

Given a finite  $\Lambda\subset\mathbb{Z}^d$,  let $\Omega_\Lambda=\{-1,0,1\}^\Lambda$.  The weight for the Boltzmann-Gibbs distribution for $\Lambda$  with external configuration $\sigma_{\Lambda^c}\in\{-1,0,1\}^{\Lambda^c}$, $\pi^{\Lambda}_{x,y,\beta}(.|\sigma_{\Lambda^c})$,  is defined  for each $\sigma_\Lambda\in\Omega_\Lambda$,  as
\begin{eqnarray}\label{eqspecif}\pi^{\Lambda}_{x,y,\beta}(\sigma_\Lambda|\sigma_{\Lambda^c})=\frac{e^{-\beta {\mathcal H}^{\Lambda}_{x,y,\beta}(\sigma_\Lambda|\sigma_{\Lambda^c})}}{\sum_{\xi_\Lambda\in \Omega^\Lambda}e^{-\beta {\mathcal H}^{\Lambda}_{x,y,\beta}(\xi_\Lambda|\sigma_{\Lambda^c})}},\end{eqnarray}
where $\beta\geq 0$ is the inverse temperature and $${\mathcal H}^{\Lambda}_{x,y,\beta}(\sigma_\Lambda|\sigma_{\Lambda^c})=-
\sum_{\langle i,j\rangle \,:\, \{i,j\}\cap \Lambda\neq \emptyset}(\sigma_i\sigma_j+y\sigma_i^2\sigma_j^2+x(\sigma_i^2+\sigma_j^2)).$$

A probability measure $\mu$ on the configuration space $\{-1,0,1\}^{\mathbb{Z}^d}$ is said to be an infinite-volume Gibbs measure (or just Gibbs measure or Gibbs state) for the formal Hamiltonian  ${\mathcal H}$ if, for each finite subset $\Lambda\subset \mathbb{Z}^d$, the conditional probability distribution $\mu(.|\sigma_{\Lambda^c})$ equals $\pi^{\Lambda}(.|\sigma_{\Lambda^c})$, where $\pi^{\Lambda}(.|\sigma_{\Lambda^c})$ is given by (\ref{eqspecif}).   Since our space of states is $\{-1,0,1\}$,  which is finite, by a compactness argument, we can show there exists at least one Gibbs measure $\mu$ for the formal Hamiltonian  ${\mathcal H}$, see for instance, Theorem $1.2$ of \cite{bib:sinai}.

If $\sigma$ and $\tilde{\sigma}$ are any two boundary conditions, the  total variation distance between $\pi^{\{i\}}_{x,y,\beta}(.|\sigma)$ and $\pi^{\{i\}}_{x,y,\beta}(.| \tilde{\sigma})$ is defined (see, for instance, equations (2.3) and (8.3)  of \cite{bib:dc} and \cite{bib:geor}, respectively) as
$$d(\pi^{\{i\}}_{x,y,\beta}(.|\sigma),\pi^{\{i\}}_{x,y,\beta}(.|\tilde{\sigma}))=
\frac{1}{2}\sum_{\xi\in\{-1,0,1\}}|\pi^{\{i\}}_{x,y,\beta}(\xi|\sigma)-\pi^{\{i\}}_{x,y,\beta}(\xi|\tilde{\sigma})|.$$

The Dobrushin uniqueness criterion, see Theorems $2$ and $8.7$ of \cite{bib:dc} and  \cite{bib:geor}, respectively,  establishes that there is at most one  Gibbs measure, $\mu_{x,y,\beta}$, for the formal Hamiltonian ${\mathcal H}$, if
 \begin{eqnarray}\label{eqdobc16.1}\sup_{i\in\mathbb{Z}^d}\sum_{j\in\mathbb{Z}^d\backslash \{i\}}\max_{\sigma\equiv \tilde{\sigma} \mbox{ off $j$}}d(\pi^{\{i\}}_{x,y,\beta}(.|\sigma),\pi^{{\{i\}}}_{x,y,\beta}(.|\tilde{\sigma}))<1,\end{eqnarray}
where  $\sigma\equiv \tilde{\sigma} \mbox{ off $j$}$, means that the configurations $\sigma$ and  $\tilde{\sigma}$ are different only at the site $j$.  Therefore, for the BEG model  the validity of (\ref{eqdobc16.1}) implies the existence of exactly one Gibbs measure.

 Since for the BEG model the interactions have range one, we may assume that  $j$ is  one of the $2d$ nearest neighbors of $i$; otherwise, $\sigma$ and $\tilde{\sigma}$ will coincide on the the boundary of $\{i\}$ and  $d(\pi^{\{0\}}_{x,y,\beta}(.|\sigma),\pi^{\{0\}}_{x,y,\beta}(.|\tilde{\sigma}))=0$. Moreover, from the Markov property, we need the specification of $\sigma$ (and so $\tilde{\sigma}$, since they differ only at $j$) only at the $2d$ nearest neighbors of  $i$.  Clearly, the left hand side of (\ref{eqdobc16.1}) does not depend on $i$, which we may assume to be the origin $0$. Also it does not depend on $j$ and we may take it as  $(1,0, \ldots,0)$, which we will denote just by $1$ (for the sake of notation, the other $2d-1$ nearest neighbors of the origin will be denoted just by $2$, \ldots, $2d$).  Therefore, the condition (\ref{eqdobc16.1}) can be replaced with
 \begin{eqnarray}\label{eqdobc16.2} \max_{\sigma\equiv \sigma' \mbox{ off $1$}}d(\pi^{\{0\}}_{x,y,\beta}(.|\sigma),\pi^{\{0\}}_{x,y,\beta}(.|\tilde{\sigma})<1/2d. \end{eqnarray}

Before stating our main results, we need few definitions.    By $\sigma\equiv \tilde{\sigma} \mbox{ off $1$}$ we will mean  $\sigma,\tilde{\sigma}\in\{-1,0,1\}^{2d}$  with $\tilde{\sigma}_1\neq \sigma_1$ and $\tilde{\sigma}_i=\sigma_i$, for $i=2,3,\ldots, 2d$.   Let
\begin{eqnarray}\label{defa}
A&=&\{(x,y):x+y+1<0, \,x<0, \, y\geq 1\}\\
B&=&\{(x,y):x+y+1<0,\, x<0, \, -1<y<1\}\label{defb}\\
C&=&\{(x,y):x+y+1<0,\, x<0, \, y\leq -1\}\label{defc}
\end{eqnarray}
and for any $t>0$ and $(x,y)\in {\mathcal U}\equiv A\cup B\cup C\subset {\mathcal D}$, define
\begin{eqnarray}\label{eqmax}r(t)&=&\frac{4}{1+t}\, \frac{1}{(1+1/t)^t}\\
\label{eqjul9.1}
a(d,x,y)&=&\left\{\begin{array}{ll}
2d|x+y+1| & \mbox{if $(x,y)\in A\cup B$}\\
2d|x|&\mbox{if $(x,y)\in  C$}\end{array} \right.\\ \label{eqjul9.2}
b(y)&=&\left\{\begin{array}{ll}
y+1 & \mbox{if $(x,y)\in A$}\\
2&\mbox{if $(x,y)\in B$}\\
|y|+1&\mbox{if $(x,y)\in C$}
\end{array} \right..
\end{eqnarray}

The next result is the most technical one, its proof will  be given in Section \ref{secunifbc}.
\begin{teo} \label{teo0} If $(x,y)\in A\cup B\cup C$, where $A,B$ and $C$ are defined by (\ref{defa}), (\ref{defb}) and (\ref{defc}), respectively, and $\sigma\equiv \tilde{\sigma} \mbox{ off $1$}$,  then

\begin{align*}
d(\pi^{\{0\}}_{x,y,\beta}(.|\sigma),\pi^{\{0\}}_{x,y,\beta}(.|\tilde{\sigma}))&\leq
 4e^{-\beta a(d,x,y)}(1-e^{-\beta b(y)}),
\end{align*}
where $a(d,x,y)$ and $b(y)$ are given by (\ref{eqjul9.1}) and (\ref{eqjul9.2}), respectively.
\end{teo}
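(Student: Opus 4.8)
The plan is to reduce the total-variation distance to the single-site weights at the origin and estimate these weights region by region. Writing the conditional distribution (\ref{eqspecif}) for $\Lambda=\{0\}$ and discarding the $\sigma_0$-independent terms that cancel in the ratio, the unnormalized weights of the three states are $w(0)=1$ and $w(\pm1)=\exp[\beta\sum_k(\pm\sigma_k+y\sigma_k^2+x)]$, the sum running over the $2d$ nearest neighbors $k$ of the origin. Since $\sigma$ and $\tilde\sigma$ agree off site $1$, the weights $w_\sigma(\pm1)$ and $w_{\tilde\sigma}(\pm1)$ differ only through the single term carrying $\sigma_1$ or $\tilde\sigma_1$, while the normalizers satisfy $Z_\sigma,Z_{\tilde\sigma}\geq1$ because $w(0)=1$. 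I would then feed these weights into Lemma \ref{lem17.x} to bound the distance by a constant multiple of $|w_\sigma(+1)-w_{\tilde\sigma}(+1)|+|w_\sigma(-1)-w_{\tilde\sigma}(-1)|$, the multiplicative constant being controlled by $1+e^{-\beta a(d,x,y)}\leq2$ once the weight bound below is in force.

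Next I would bound each weight uniformly in the boundary condition. Each summand $\pm\sigma_k+y\sigma_k^2+x$ takes one of the three values $x$, $x+y+1$, $x+y-1$ according to $\sigma_k=0,\pm1$; on $\mathcal U$ one has $x<0$ and $x+y+1<0$, hence $x+y-1<0$ as well, so all three are negative and $w(\pm1)\leq e^{\beta\cdot2d\,m(x,y)}$ with $m(x,y)=\max\{x,\,x+y+1,\,x+y-1\}$. Since $x+y+1\geq x$ exactly when $y\geq-1$, one finds $m=x+y+1$ for $y>-1$ (regions $A$ and $B$) and $m=x$ for $y\leq-1$ (region $C$), so that $-2d\,m(x,y)=a(d,x,y)$ of (\ref{eqjul9.1}) and therefore $\max\{w_\sigma(\pm1),w_{\tilde\sigma}(\pm1)\}\leq e^{-\beta a(d,x,y)}$.

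For the differences I would use the identity $|w_\sigma(\xi)-w_{\tilde\sigma}(\xi)|=\max\{w_\sigma(\xi),w_{\tilde\sigma}(\xi)\}\,(1-e^{-|c_\xi|})$ for $\xi=\pm1$, where $c_{+}=\beta[(\sigma_1-\tilde\sigma_1)+y(\sigma_1^2-\tilde\sigma_1^2)]$ and $c_{-}=\beta[-(\sigma_1-\tilde\sigma_1)+y(\sigma_1^2-\tilde\sigma_1^2)]$. Running over the pairs $(\sigma_1,\tilde\sigma_1)$ with $\sigma_1\neq\tilde\sigma_1$, the quantity $|c_\pm|/\beta$ ranges over the set $\{|1+y|,\,|1-y|,\,2\}$, whose maximum equals $y+1$ for $y\geq1$, equals $2$ for $-1<y<1$, and equals $|y|+1$ for $y\leq-1$; this is precisely $b(y)$ of (\ref{eqjul9.2}). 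Combined with the weight bound this yields $|w_\sigma(\xi)-w_{\tilde\sigma}(\xi)|\leq e^{-\beta a(d,x,y)}(1-e^{-\beta b(y)})$ for each $\xi=\pm1$.

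Assembling the pieces, the two states $\xi=\pm1$ contribute a factor $2$, the normalizer factor $1+e^{-\beta a(d,x,y)}\leq2$ contributes another, and together with Lemma \ref{lem17.x} this produces the claimed bound $4e^{-\beta a(d,x,y)}(1-e^{-\beta b(y)})$. I expect the main obstacle to be the bookkeeping that yields the clean constant $4$: because $Z_\sigma\neq Z_{\tilde\sigma}$ do not cancel, controlling the distance by the weight differences forces one to handle cross terms such as $w_\sigma(+1)w_{\tilde\sigma}(-1)-w_{\tilde\sigma}(+1)w_\sigma(-1)$ and, for each state, to take whichever of $\sigma$ or $\tilde\sigma$ makes the relevant factor appear as $1-e^{-|c_\xi|}<1$ rather than the unbounded $e^{|c_\xi|}-1$; this is presumably the content packaged into Lemma \ref{lem17.x}. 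The remaining difficulty is purely organizational, namely the region-dependent identification of $m(x,y)$ and of $\max|c_\pm|$, i.e.\ the split into the cases $A$, $B$ and $C$.
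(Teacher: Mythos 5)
Your proposal is correct, and it reaches the stated bound by a genuinely different (and more unified) route than the paper. Both arguments start from Lemma \ref{lem17.x}, whose right-hand side is exactly $|u_+-v_+|+|u_--v_-|+|u_+v_--v_+u_-|$ in your notation $u_\xi=w_\sigma(\xi)$, $v_\xi=w_{\tilde\sigma}(\xi)$ (one checks $\theta_s=v_s-u_s$ and $\psi=v_+u_--u_+v_-$). From there the paper splits the boundary conditions into the two classes $|\sigma_1|=|\tilde\sigma_1|$ and $|\sigma_1|\neq|\tilde\sigma_1|$ (Lemmas \ref{lemx1} and \ref{lemdob16.1}), parametrizes by $k=\sum_{j\ge2}\sigma_j^2$ and $n=\sum_{j\ge2}\sigma_j$, and runs separate $\cosh$/$\sinh$ estimates for $A$, $B$, $C$ in each class; the case split is what produces the region-dependent exponents. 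You instead prove two uniform facts: every weight $w(\pm1)$ is at most $e^{2d\beta m(x,y)}=e^{-\beta a(d,x,y)}$ because each per-bond energy lies in $\{x,\,x+y+1,\,x+y-1\}$, all negative on $\mathcal U$; and the log-ratio $|c_\xi|$ of corresponding weights under a single spin change lies in $\beta\{|1+y|,|1-y|,2\}$, whose maximum is $\beta b(y)$. The cross term is then absorbed via $|u_+v_--v_+u_-|\le u_+|v_--u_-|+u_-|u_+-v_+|\le e^{-\beta a}\bigl(|u_+-v_+|+|u_--v_-|\bigr)$, giving the overall factor $(1+e^{-\beta a})\cdot 2\le 4$ — all of these steps check out, including the region-by-region identification of $m$ and of $\max|c_\pm|$ with $a$ and $b$. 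What your version buys is transparency and economy: no case distinction on the boundary spins, and a conceptual reading of $a$ (worst per-bond energy of a nonzero spin at the origin, summed over $2d$ bonds) and $b$ (largest possible change of log-weight under one boundary spin flip). What the paper's version buys is slightly sharper intermediate constants (e.g.\ the factor $3$ in Lemma \ref{lemdob16.1}), which are however discarded in the final statement, so nothing is lost by your route.
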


 Notice that the upper bounds for $d(\pi^{\{0\}}_{x,y,\beta}(.|\sigma),\pi^{\{0\}}_{x,y,\beta}(.|\tilde{\sigma}))$ appearing in  Theorem \ref{teo0}  are  of the following form  $w(a,b,\beta)=4e^{-a \beta}(1-e^{-b \beta}),$ where $a,b>0$, are given by (\ref{eqjul9.1}) and (\ref{eqjul9.2}), respectively. For $a$ and $b$ fixed,  $w(a,b,\beta)$ has a global maximum at $\beta_c=\beta_c(a,b)$, where $e^{-\beta_c b}=\frac{a}{a+b}$ and, at this critical value, we have   $w(a,b,\beta_c(a,b))=r(a/b),$ where  $r(t)$ is  given by (\ref{eqmax}), and so, for any $\beta$, we have
$w(a, b,\beta)\leq w(a,b,\beta_c(a,b))= r(a/b).$
This together with condition (\ref{eqdobc16.2}) imply that,  for all $\beta$, we have  $d(\pi^{\{0\}}_{x,y,\beta}(.|\sigma),\pi^{\{0\}}_{x,y,\beta}(.|\tilde{\sigma}))<r(a(d,x,y)/b(y)).$ In particular, if  $(x,y)\in{\mathcal U}_{Dob}^{d}\equiv  \{(x,y)\in {\mathcal U}:r(a(d,x,y)/b(y))<1/(2d)\},$ then  there is exactly one Gibbs measure for all temperature.

 In order to get the region  ${\mathcal U}_{Dob}^{d}$  we  need some numerical calculations; however, they are quite simple. In fact, the function $r(t)$ is decreasing in $t$ and, for each $d$ and $y$ fixed, the function $a(d,x,y)/b(y)$ is increasing  in $|x|$, and so, the function $r(a(d,x,y)/b(y))$ is decreasing in $|x|$, for $d$ and $y$ fixed. Therefore,  the boundary of ${\mathcal U}_{Dob}^{d}$ is the curve $x=x(d,y)$ which is the solution of $r(a(d,x,y)/b(y))=1/(2d)$. In order to find $x(d,y)$, the only numerical calculation we need is to find the solution, $t_d$, of the equation
\begin{eqnarray}\label{eqdobnumz}r(t)=1/(2d).\end{eqnarray}  For two and three dimensions we have the following  numerical values: $t_2\approx 5.39315$ and $t_3\approx 8.33383$.  Once we have $t_d$, the curve $x(d,y)$ is obtained from the relation
$t_d=a(d,x,y)/b(y)$, namely, $x(d,y)$ is the polygonal curve given by
\begin{align}\label{curveuniq}
x(d,y)&=\left\{\begin{array}{ll} -\,\frac{(t_d+2d)}{2d}(y+1)& \mbox{if $y\geq 1$}\\
-\,\frac{d(y+1)+t_d}{d}& \mbox{if $|y|<1$}\\
-\,\frac{t_d}{2d}(|y|+1)& \mbox{if $y\leq -1$}\end{array} \right..
\end{align}
As a consequence we have the following result.

\begin{teo}\label{teo1} For any $d$ and $y$, if $x<x(d,y)$, where $x(d,y)$ is given by (\ref{curveuniq}), then  there is a unique Gibbs state for all temperature.
\end{teo}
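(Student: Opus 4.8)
The plan is to feed the single-site estimate of Theorem \ref{teo0} into the Dobrushin criterion (\ref{eqdobc16.2}), thereby reducing the assertion to an elementary one-variable optimization followed by two monotonicity observations. By Theorem \ref{teo0}, for every pair with $\sigma\equiv\tilde\sigma$ off $1$ and every $\beta\ge 0$ one has $d(\pi^{\{0\}}_{x,y,\beta}(.|\sigma),\pi^{\{0\}}_{x,y,\beta}(.|\tilde\sigma))\le w(a,b,\beta)$, where $w(a,b,\beta)=4e^{-a\beta}(1-e^{-b\beta})$ and $a=a(d,x,y)>0$, $b=b(y)>0$ on each of $A,B,C$. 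Since (\ref{eqdobc16.2}) must hold for all temperatures simultaneously, the first step is to bound $w(a,b,\beta)$ uniformly in $\beta$.

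First I would carry out the optimization in $\beta$. Writing $w=4e^{-a\beta}-4e^{-(a+b)\beta}$ gives $\partial_\beta w=4e^{-a\beta}\bigl[(a+b)e^{-b\beta}-a\bigr]$, which vanishes exactly at the $\beta_c$ determined by $e^{-b\beta_c}=a/(a+b)$; as $w\to 0$ when $\beta\to 0^{+}$ and when $\beta\to\infty$, this critical point is the global maximum. Substituting $e^{-a\beta_c}=\bigl(a/(a+b)\bigr)^{a/b}$ and $1-e^{-b\beta_c}=b/(a+b)$ and setting $t=a/b$ yields $w(a,b,\beta_c)=r(t)$ with $r$ as in (\ref{eqmax}). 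Hence $d(\pi^{\{0\}}_{x,y,\beta}(.|\sigma),\pi^{\{0\}}_{x,y,\beta}(.|\tilde\sigma))\le r\bigl(a(d,x,y)/b(y)\bigr)$ for all $\beta$, and by (\ref{eqdobc16.2}) uniqueness for all temperatures holds as soon as $r\bigl(a(d,x,y)/b(y)\bigr)<1/(2d)$.

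Next I would record the two monotonicity facts that convert this into the region $x<x(d,y)$. Taking logarithms, $\ln r(t)=\ln 4-\ln(1+t)-t\ln(1+1/t)$, and a short computation gives $\frac{d}{dt}\ln r(t)=-\ln(1+1/t)<0$, so $r$ is strictly decreasing and $r(t)<1/(2d)$ is equivalent to $t>t_d$, where $t_d$ is the unique root of (\ref{eqdobnumz}). Then, directly from (\ref{eqjul9.1})--(\ref{eqjul9.2}), I would observe that on each of $A$, $B$, $C$ the denominator $b(y)$ is independent of $x$ while $a(d,x,y)$ is an affine strictly increasing function of $|x|$ (with slope $2d$); thus, for fixed $d$ and $y$, the ratio $a(d,x,y)/b(y)$ is strictly increasing in $|x|$, so $r(a/b)<1/(2d)$ amounts to $a(d,x,y)/b(y)>t_d$.

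Finally I would solve $a(d,x,y)=t_d\,b(y)$ for the threshold value of $x$ on each region: on $A$ this is $2d(-(x+y+1))=t_d(y+1)$, on $B$ it is $2d(-(x+y+1))=2t_d$, and on $C$ it is $-2dx=t_d(|y|+1)$, each linear in $x$ and giving precisely the three branches of (\ref{curveuniq}). The one point requiring a little care is to confirm that $x<x(d,y)$ indeed places $(x,y)$ in $\mathcal U=A\cup B\cup C$, so that Theorem \ref{teo0} is applicable; this reduces to checking on each branch that $x(d,y)<0$ and $x(d,y)+y+1<0$, after which both inequalities persist for any smaller $x$. Beyond this, I anticipate no real obstacle, since all the genuine analytic content resides in Theorem \ref{teo0} and what remains is the $\beta$-maximization, the sign of $\frac{d}{dt}\ln r$, and a linear solve.
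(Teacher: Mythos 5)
Your proposal is correct and follows essentially the same route as the paper: maximize $w(a,b,\beta)=4e^{-a\beta}(1-e^{-b\beta})$ over $\beta$ to get $r(a/b)$, use the monotonicity of $r$ and of $a(d,x,y)/b(y)$ in $|x|$, and solve $a/b=t_d$ linearly for the three branches of $x(d,y)$. Your added verification that $x<x(d,y)$ indeed places $(x,y)$ in $\mathcal U$ (so Theorem \ref{teo0} applies) is a point the paper leaves implicit, and it checks out on all three branches.
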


As a consequence of the above theorem, making $y=0$ in (\ref{curveuniq}), we conclude that for the Blume-Capel model there is a unique Gibbs state, for all temperature, if $x<-\,\frac{d+t_d}{d}\equiv x_c(d)$, where  $x_c(2)\approx -3.69658$ and $x_c(3)\approx -3.77794$.

\section{An upper bound  expression  for   $d(\pi^{\{0\}}_{x,y,\beta}(.|\sigma),\pi^{\{0\}}_{x,y,\beta}(.|\tilde{\sigma}))$}\label{sec2}

Our goal in this section is to prove Lemma \ref{lem17.x}, which gives an upper bound on $d(\pi^{\{0\}}_{x,y,\beta}(.|\sigma),\pi^{\{0\}}_{x,y,\beta}(.|\tilde{\sigma}))$ for a fixed pair of boundary conditions $\sigma,\tilde{\sigma}$ such that  $\sigma\equiv \tilde{\sigma} \mbox{ off $1$}$.  Since we need to compute quantities like $|\pi^{\{0\}}_{x,y,\beta}(\xi|\sigma)-\pi^{\{0\}}_{x,y,\beta}(\xi|\tilde{\sigma})|$, and these quantities are not affected by the interchanging of $\sigma$ and $\tilde{\sigma}$, our choice of  $\tilde{\sigma}_1$ will be such that $|\tilde{\sigma}_1|\geq |{\sigma}_1|$;  if $|\tilde{\sigma}_1|=|\sigma_1|$, then we will take  $\tilde{\sigma}_1=1$ and ${\sigma}_1=-1$. Moreover, for the sake of notation, let $\sigma^2=\sum_{i=1}^{2d}\sigma_i^2$.

\begin{lema}\label{lem17.x} For any two configurations $\sigma\equiv \tilde{\sigma} \mbox{ off $1$}$, we have
\begin{eqnarray}\label{eqdob15.2}
d(\pi^{\{0\}}_{x,y,\beta}(.|\sigma),\pi^{\{0\}}_{x,y,\beta}(.|\tilde{\sigma}))&\leq & \sum_{s=\pm 1}|\theta_s(\sigma,\tilde{\sigma}_1)|+|\psi(\sigma,\tilde{\sigma}_1)|,
\end{eqnarray}
 where \begin{eqnarray}\label{eqdob15.3}
\theta_s(\sigma,\tilde{\sigma}_1)&=&e^{\beta (2d x +y\sigma^2)} (e^{\beta y(\tilde{\sigma}_1^2-\sigma_1^2)}e^{\beta s (\tilde{\sigma}_1-\sigma_1)}-1)e^{\beta s (\sigma_1+\sigma_2+\dots+\sigma_{2d})}\\
\label{eqdob15.4}\psi(\sigma,\tilde{\sigma}_1)&=&2 e^{\beta (4d x +2y\sigma^2)}  e^{\beta y(\tilde{\sigma}_1^2-\sigma_1^2)}\sinh(\beta(\tilde{\sigma}_1-\sigma_1)).\end{eqnarray}
\end{lema}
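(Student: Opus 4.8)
The plan is to write both single-site specifications as explicit ratios of Boltzmann weights, take their difference through one algebraic identity, and then bound the result term by term. Restricting the Hamiltonian to $\Lambda=\{0\}$, the only pairs that contribute are $\{0,j\}$ with $j$ a neighbour of the origin, so $-\beta\mathcal H^{\{0\}}(\xi\mid\sigma)=\beta[\xi\sum_{j}\sigma_j+y\xi^2\sigma^2+2dx\xi^2+x\sigma^2]$. The term $x\sigma^2$ is independent of $\xi$ and cancels in the ratio (\ref{eqspecif}), so, writing $S=\sigma_1+\dots+\sigma_{2d}$ and defining the unnormalized weights $W(0)=1$ and $W(\pm 1)=e^{\beta(2dx+y\sigma^2\pm S)}$, I would record that $\pi^{\{0\}}_{x,y,\beta}(\xi\mid\sigma)=W(\xi)/Z$ with $Z=1+2e^{\beta(2dx+y\sigma^2)}\cosh(\beta S)\ge 1$.

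Next I would introduce the corresponding quantities $\tilde W(\xi)$ and $\tilde Z$ built from $\tilde\sigma$; since $\tilde\sigma$ differs from $\sigma$ only at site $1$, a direct factorization gives $\tilde W(s)-W(s)=\theta_s(\sigma,\tilde\sigma_1)$ for $s=\pm 1$, which is exactly (\ref{eqdob15.3}), and hence $\tilde Z-Z=\theta_{+1}+\theta_{-1}$ because $W(0)=\tilde W(0)=1$. The engine of the argument is the identity
\begin{align*}
\frac{W(\xi)}{Z}-\frac{\tilde W(\xi)}{\tilde Z}=\frac{W(\xi)(\tilde Z-Z)-\bigl(\tilde W(\xi)-W(\xi)\bigr)Z}{Z\tilde Z},
\end{align*}
which expresses each summand of the total variation distance as a numerator over the common denominator $Z\tilde Z$.

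Evaluating these numerators for $\xi=0,+1,-1$, using $W(\pm1)-Z=-(W(\mp1)+1)$, leaves expressions in which the combination $W(+1)\theta_{-1}-W(-1)\theta_{+1}$ appears. The key algebraic observation, which I expect to be the main obstacle, is that this cross term collapses to $-\psi$: from the factorizations $\theta_{\pm1}=W(\pm1)\bigl(e^{\beta(y(\tilde\sigma_1^2-\sigma_1^2)\pm(\tilde\sigma_1-\sigma_1))}-1\bigr)$ together with $W(+1)W(-1)=e^{\beta(4dx+2y\sigma^2)}$, the difference of exponentials produces exactly $-2e^{\beta y(\tilde\sigma_1^2-\sigma_1^2)}\sinh(\beta(\tilde\sigma_1-\sigma_1))$, giving formula (\ref{eqdob15.4}). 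With this, the three numerators read $|\theta_{+1}+\theta_{-1}|$, $|\psi+\theta_{+1}|$ and $|\psi-\theta_{-1}|$.

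Finally I would apply the triangle inequality to each of the three absolute values, bounding their sum by $2(|\theta_{+1}|+|\theta_{-1}|+|\psi|)$; dividing by the factor $2$ in the definition of $d$ and by $Z\tilde Z$, and then using $Z\tilde Z\ge 1$ to discard the denominator, yields precisely (\ref{eqdob15.2}). Everything outside the identification of the cross term with $\psi$ is routine once the weights $W(\xi)$ and $\tilde W(\xi)$ are written down explicitly, and the positivity bound $Z\tilde Z\ge 1$ is immediate from the presence of the spin value $0$, which always contributes weight $1$ to each partition function.
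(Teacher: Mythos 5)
Your proposal is correct and follows essentially the same route as the paper: both reduce the specification to the unnormalized weights $h(\xi,\sigma)=e^{\beta\xi^2(2dx+y\sigma^2)}e^{\beta\xi S}$, put the difference of the two probabilities over the common denominator $Z\tilde Z\ge 1$, identify the three numerators as $\theta_{+1}+\theta_{-1}$, $\mp(\theta_{\pm 1}+(\pm1)\psi)$ (your cross-term computation $W(+1)\theta_{-1}-W(-1)\theta_{+1}=-\psi$ is exactly the paper's identity (\ref{eq21.1})), and finish with the triangle inequality and the factor $\tfrac12$ in the definition of $d$. No gaps.
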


 \begin{proof} For each $\xi\in\{-1,0,1\}$ and $\sigma\in\{-1,0,1\}^{2d}$, by definition,
$$\pi^{\{0\}}_{x,y,\beta}(\xi|\sigma)=\frac{f(\xi,\sigma)}{\sum_{\eta=0,\pm 1}f(\eta,\sigma)},$$
where  $f(\xi,\sigma)= e^{\beta x\sigma^2} e^{\beta \xi^2(2d x +y\sigma^2)}e^{\beta \xi(\sigma_1+\sigma_2+\ldots+\sigma_{2d})}.$  Therefore,  $$\pi^{\{0\}}_{x,y,\beta}(\xi|\sigma)=\frac{h(\xi,\sigma)}{\sum_{\eta=0,\pm 1}h(\eta,\sigma)},$$
  where $h(\xi,\sigma)= e^{\beta \xi^2(2d x +y\sigma^2)}e^{\beta \xi(\sigma_1+\sigma_2+\ldots+\sigma_{2d})}$.

 In order to express relations involving $\tilde{\sigma}$ in terms of  $\sigma$, notice  that we can write  $h(\xi,\tilde{\sigma})= g(\xi,\tilde{\sigma}_1,\sigma_1)h(\xi,\sigma),$  where
$g(\xi,\tilde{\sigma}_1,\sigma_1)= e^{\beta y\xi^2(\tilde{\sigma}_1^2-\sigma_1^2)}e^{\beta\xi(\tilde{\sigma}_1-\sigma_1)},$
 in particular,
\begin{align*}\pi^{\{0\}}_{x,y,\beta}(\xi|\tilde{\sigma})=
\frac{g(\xi,\tilde{\sigma}_1,\sigma_1) h(\xi,{\sigma})}{\sum_{\eta=0,\pm 1}g(\eta, \tilde{\sigma}_1,\sigma_1)h(\eta,{\sigma})}\end{align*}
and so,
\begin{eqnarray}\label{eqdob15.6}
\mbox{}&\mbox{} &\pi^{\{0\}}_{x,y,\beta}(\xi|\sigma)-\pi^{\{0\}}_{x,y,\beta}(\xi|\tilde{\sigma})\nonumber\\
\mbox{}&=& h(\xi,\sigma)\,\,\frac{\sum_{\eta=0,\pm1}(g(\eta,\tilde{\sigma}_1,\sigma_1)-g(\xi,\tilde{\sigma}_1,\sigma_1))h(\eta,\sigma)}{\sum_{\eta=0,\pm 1}h(\eta,{\sigma})\,\,\,\sum_{\eta=0,\pm 1}g(\eta, \tilde{\sigma}_1,\sigma_1)h(\eta,{\sigma})}.
\end{eqnarray}
Since both sums in the denominator of (\ref{eqdob15.6}) are bounded from below by $1$, then the denominator of  (\ref{eqdob15.6}) is also bounded from below by $1$.  Therefore,
\begin{align*}
|\pi^{\{0\}}_{x,y,\beta}(\xi|\sigma)-\pi^{\{0\}}_{x,y,\beta}(\xi|\tilde{\sigma})|&\leq & h(\xi,\sigma)\,\,|\sum_{\eta=0,\pm1}(g(\eta,\tilde{\sigma}_1,\sigma_1)-g(\xi,\tilde{\sigma}_1,\sigma_1))h(\eta,\sigma)|.
\end{align*}
By straightforward calculations, we have
\begin{eqnarray}\label{eq21.1}
 \mbox{}&\mbox{} &h(\xi,\sigma)\sum_{\eta=0,\pm1}(g(\eta,\tilde{\sigma}_1,\sigma_1)-g(\xi,\tilde{\sigma}_1,\sigma_1))h(\eta,\sigma)\nonumber\\
 \mbox{}&=& \left\{ \begin{array}{ll} \sum_{s=\pm1}\theta_s(\sigma,\tilde{\sigma}_1) &\mbox{ if $\xi=0$}\\
 - \theta_\xi(\sigma,\tilde{\sigma}_1)-\xi  \psi(\sigma,\tilde{\sigma}_1)&\mbox{  if $\xi\in\{-1,1\}$}\end{array}\right. \end{eqnarray}
 and this concludes the proof of the lemma.
 \end{proof}

\section{The proof of Theorem \ref{teo0}}\label{secunifbc}

The proof of Theorem \ref{teo0}  will follow immediately from  Lemmas \ref{lemx1} and  \ref{lemdob16.1}, which will be given in Subsections \ref{subsec1} and \ref{subsec2}, respectively.
From Lemma \ref{lem17.x}, upper bounds on $d(\pi^{\{0\}}_{x,y,\beta}(.|\sigma),\pi^{\{0\}}_{x,y,\beta}(.|\tilde{\sigma}))$  which are uniform  in configurations $\sigma\equiv \tilde{\sigma} \mbox{ off $1$}$ can be translated into upper bounds on $|\psi(\sigma,\tilde{\sigma}_1)|$, $|\theta_{-1}(\sigma,\tilde{\sigma}_1)|$ and $|\theta_1(\sigma,\tilde{\sigma}_1)|,$  which are uniform in such configurations. In order to accomplish this,  we will split the boundary conditions $\sigma\equiv \tilde{\sigma} \mbox{ off $1$}$ into two classes: $(i)$ those $\sigma\equiv \tilde{\sigma} \mbox{ off $1$}$  for which $|\sigma_1|= |\tilde{\sigma}_1|$ and $(ii)$ those $\sigma\equiv \tilde{\sigma} \mbox{ off $1$}$  for which  $|\sigma_1|\neq  |\tilde{\sigma}_1|$.

\subsection{ The case $\sigma\equiv \tilde{\sigma} \mbox{ off $1$}$ and  $|\sigma_1|= |\tilde{\sigma}_1|$}\label{subsec1}

\begin{lema}\label{lemx1} Let $(x,y)\in A\cup B\cup C$, where $A$, $B$ and $C$ are defined by (\ref{defa}), (\ref{defb}) and (\ref{defc}), respectively. Then for any $\sigma\equiv \tilde{\sigma} \mbox{ off $1$}$ such that $|\sigma_1|= |\tilde{\sigma}_1|,$  we have

\begin{eqnarray}\label{eqdob15.11}
d(\pi^{\{0\}}_{x,y,\beta}(.|\sigma),\pi^{\{0\}}_{x,y,\beta}(.|\tilde{\sigma}))
\leq \left\{
\begin{array}{ll}4 e^{\beta (2dx+2d(y+1)) } (1 -e^{-2\beta })&\mbox{if $(x,y)\in A\cup B$}\\
4e^{\beta(2d x+y+1) } (1 -e^{-2\beta })& \mbox{if $(x,y)\in C$.}
\end{array}
\right.
\end{eqnarray}
\end{lema}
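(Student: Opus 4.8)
The plan is to start from Lemma~\ref{lem17.x}, which bounds the total variation distance by $\sum_{s=\pm 1}|\theta_s(\sigma,\tilde\sigma_1)|+|\psi(\sigma,\tilde\sigma_1)|$, and to estimate each of these three terms uniformly over the configurations in case $(i)$, where $|\sigma_1|=|\tilde\sigma_1|$. According to the convention fixed just before Lemma~\ref{lem17.x}, when $|\tilde\sigma_1|=|\sigma_1|$ we take $\tilde\sigma_1=1$ and $\sigma_1=-1$, so that $\tilde\sigma_1^2-\sigma_1^2=0$ and $\tilde\sigma_1-\sigma_1=2$. First I would substitute these two values into the formulas (\ref{eqdob15.3}) and (\ref{eqdob15.4}). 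The $y$-dependent prefactors $e^{\beta y(\tilde\sigma_1^2-\sigma_1^2)}$ collapse to $1$, which is the key simplification in this case: we get $\theta_s=e^{\beta(2dx+y\sigma^2)}(e^{2\beta s}-1)e^{\beta s(\sigma_1+\cdots+\sigma_{2d})}$ and $\psi=2e^{\beta(4dx+2y\sigma^2)}\sinh(2\beta)$.

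Next I would bound each term. For $\psi$, write $\sinh(2\beta)=\tfrac12 e^{2\beta}(1-e^{-4\beta})$ and absorb the exponential into the prefactor, watching the sign of the exponents: since $(x,y)\in\mathcal U\subset\mathcal D$ we have $x<0$, and on $A\cup B$ one has $y+1>0$, whereas on $C$ one has $y\le -1$, so the worst-case (largest) value of $y\sigma^2$ over $\sigma^2\in\{0,1,\dots,2d\}$ differs between the two parameter regimes. This is exactly the dichotomy that produces the two cases in (\ref{eqdob15.11}). For the $\theta_s$ terms, note that with $\sigma_1=-1$ fixed the sum $\sigma_1+\cdots+\sigma_{2d}$ ranges over a bounded set, and $|e^{2\beta s}-1|$ together with the factor $e^{\beta s(\sigma_1+\cdots+\sigma_{2d})}$ must be maximized over $s=\pm1$ and over the remaining coordinates $\sigma_2,\dots,\sigma_{2d}\in\{-1,0,1\}$. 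I expect that after collecting the two $\theta$ terms and the $\psi$ term, and maximizing the exponent $2dx+y\sigma^2$ (respectively $4dx+2y\sigma^2$) over the admissible range of $\sigma^2$, everything combines into the single clean bound $4e^{\beta(\cdots)}(1-e^{-2\beta})$ displayed in the lemma.

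The main obstacle will be the bookkeeping of signs and the simultaneous maximization over the discrete variables $s$, $\sigma^2$, and the spin sum $\sigma_1+\cdots+\sigma_{2d}$. One has to verify that the same configuration (or at least configurations giving the same exponential weight) simultaneously realizes the worst case for all three terms so that they can be added to give the stated common prefactor, and that the factor $(1-e^{-2\beta})$ genuinely dominates the $\psi$ contribution's $(1-e^{-4\beta})$ after the prefactors are matched. The correct choice is to maximize $y\sigma^2$: on $A\cup B$ where $y+1>0$ this is achieved at $\sigma^2=2d$, contributing $2d(y+1)$ once the spin-sum factor is incorporated (note the extra $+2d$ comes from the term $e^{\beta s(\sigma_1+\cdots+\sigma_{2d})}$ at its extreme value $2d$), giving the exponent $2dx+2d(y+1)$; on $C$ where $y+1\le 0$ the maximum of the relevant exponent occurs at a boundary value producing $2dx+y+1$. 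Once the exponents are pinned down in this way, the remaining estimate is elementary, reducing to bounding $|e^{2\beta}-1|$, $|e^{-2\beta}-1|$ and $\sinh(2\beta)$ against a common factor of $(1-e^{-2\beta})$ times a power of $e^{\beta}$, which I would carry out by factoring out the dominant exponential and checking that the resulting numerical coefficients sum to at most $4$.
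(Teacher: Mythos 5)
Your proposal follows essentially the same route as the paper's proof: substitute $\tilde{\sigma}_1=1$, $\sigma_1=-1$ into Lemma~\ref{lem17.x} (so the $e^{\beta y(\tilde{\sigma}_1^2-\sigma_1^2)}$ factors drop out), bound $\sinh(2\beta)\leq e^{2\beta}(1-e^{-2\beta})$, show the $\psi$ term is dominated by the $\theta$ terms so the coefficients add to $4$, and maximize the combined exponent $(y+1)\sigma^2$ over $\sigma^2\in\{1,\ldots,2d\}$, which produces exactly the two cases of (\ref{eqdob15.11}). The only slip is a parenthetical one: with $\sigma_1=-1$ the spin sum $\sigma_1+\cdots+\sigma_{2d}$ peaks at $2d-2$, not $2d$, and the missing $e^{2\beta}$ comes from the factor $e^{\beta s(\tilde{\sigma}_1-\sigma_1)}-1$; this does not affect your final exponents.
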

 \begin{proof} If  $|\sigma_1|= |\tilde{\sigma}_1|$  then, as we remarked in the beginning of Section \ref{sec2}, we may assume  that $\tilde{\sigma}_1=1$ and  $\sigma_1=-1$.  For the sake of notation,  let  $\sum_{j=2}^{2d}\sigma_j^2=k$, then $k\in\{0,1,\ldots,2d-1\}$ is the number of the variables $\sigma_2, \ldots, \sigma_{2d}$ which are different from $0$.   Also, let  $\sum_{j=2}^{2d}\sigma_j=n$, which implies $n\in\{-k,...,k\}$. Therefore,    from (\ref{eqdob15.3}), (\ref{eqdob15.4}),  since $\sinh(2\beta)\leq e^{2\beta}(1-e^{-2\beta})$, we have
\begin{align}\label{eqdob15.9}
|\psi(\sigma,\tilde{\sigma}_1)|&= 2 e^{\beta (4d x +2(k+1)y)} \sinh (2\beta)\nonumber\\
\mbox{}&\leq  2 e^{\beta (4d  x +2(k+1)y+2)}  (1 -e^{-2\beta })\\
|\theta_{s}(\sigma,\tilde{\sigma}_1)|&=  e^{\beta (2d x +(k+1)y)} |e^{2\beta s }-1|e^{\beta s (-1+n)}\nonumber\\
\mbox{}&=  e^{\beta (2d x +(k+1)y+1)} (1-e^{-2\beta  })e^{\beta s n}\label{eqdez319}.
\end{align}
From (\ref{eqdez319}), since $\cosh(\beta s n)\leq e^{\beta |n|}\leq e^{\beta k}$, we have
\begin{align}\label{eqdob15.10}
\sum_{s=\pm 1}|\theta_s(\sigma,\tilde{\sigma}_1)|&\leq  2e^{\beta (2d x +(k+1)(y+1))} (1-e^{-2\beta  }).
\end{align}
Then, from (\ref{eqdob15.9}) and (\ref{eqdob15.10}), since $2dx+(k+1)y-k+1<0$ for  $(x,y)\in A\cup B\cup C$ and $1\leq k+1\leq 2d$,  we have
\begin{align*}
d(\pi^{\{0\}}_{x,y,\beta}(.|\sigma),\pi^{\{0\}}_{x,y,\beta}(.|\tilde{\sigma}))&\leq  2 e^{\beta(4d x+2(k+1)y+2) } (1 -e^{-2\beta })\\
\mbox{}&\mbox{}  +2 e^{\beta (2d x +(k+1)(y+1))} (1 -e^{-2\beta })\\
\mbox{}&=2 e^{\beta (2d x +(k+1)(y+1))} (1 -e^{-2\beta })(e^{\beta(2dx+(k+1)y-k+1)}+1)\\
\mbox{}&\leq 4 e^{\beta (2d x +(k+1)(y+1))} (1 -e^{-2\beta })\\
\mbox{}&\leq  4\,\,\max_{k=0,\ldots, 2d-1}\{e^{\beta (2d x +(k+1)(y+1))}\}\,\, (1 -e^{-2\beta })\\
\mbox{}&\leq \left\{
\begin{array}{ll}4 e^{\beta (2dx+2d(y+1)) } (1 -e^{-2\beta })&\mbox{if $(x,y)\in A\cup B$}\\
4e^{\beta(2d x+y+1) } (1 -e^{-2\beta })& \mbox{if $(x,y)\in C$}
\end{array}\right.,
\end{align*}
which proves the lemma.\end{proof}

\subsection{ The case $\sigma\equiv \tilde{\sigma} \mbox{ off $1$}$ and  $|\sigma_1|\neq  |\tilde{\sigma}_1|$}\label{subsec2}

For  $\sigma\equiv \tilde{\sigma} \mbox{ off $1$}$ such that $|\sigma_1|\neq  |\tilde{\sigma}_1|$,  as we remarked in the beginning of Section \ref{sec2}, we may assume  that $\tilde{\sigma}_1= \{-1,1\}$  and $\sigma_1=0$.  Notice that because of the  relation $\theta_i(\sigma,-\tilde{\sigma}_1)=\theta_{-i}(-\sigma,\tilde{\sigma}_1),$
 since our upper bound for $\sum_{s=\pm 1}|\theta_{s}(\sigma,1)|$ will be uniform in $\sigma$, then it will also be an upper bound for  $\sum_{s=\pm 1}|\theta_{s}(\sigma,-1)|$. Therefore, as long as  upper bounds for $\sum_{s=\pm 1}|\theta_{s}(\sigma,\tilde{\sigma}_1)|$ which are uniform in $\sigma$ are concerned, we may assume that $\tilde{\sigma}_1=1$.  On the other hand, if $\sigma_1=0$ and $\tilde{\sigma}_1=\pm 1$, since the function $\sinh(.)$ is odd, then  from (\ref{eqdob15.4}), we have
\begin{align}\label{eqdob16.2}
|\psi(\sigma,\tilde{\sigma}_1)|&= 2e^{\beta (4d x +2ky)}  e^{\beta y\,\, \tilde{\sigma}_1^2}\sinh(\beta \,\,|\tilde{\sigma}_1|)= 2e^{\beta (4d x +(2k+1)y)}\sinh \beta,
\end{align}
which does not distinguish $\tilde{\sigma}_1=1$ from $\tilde{\sigma}_1=-1$.  Therefore,  for the class of boundary conditions $\sigma\equiv \tilde{\sigma} \mbox{ off $1$}$ for which we have $|\sigma_1|\neq  |\tilde{\sigma}_1|$, we will assume $\tilde{\sigma}_1=1$ and $\sigma_1=0$.  And so, from (\ref{eqdob15.3}),  we have
\begin{align}\label{eqdob16.3}
|\theta_1(\sigma,1)|&=    e^{\beta (2d x +ky)} |1 -e^{\beta  (y+1) }| e^{\beta n}\nonumber\\
\mbox{}&= \left\{ \begin{array}{ll}
e^{\beta (2d x +ky)} (1 -e^{\beta  (y+1) }) e^{\beta n}& \mbox{if $y\leq -1$}\\
e^{\beta (2d x +(k+1)y}) (1 -e^{-\beta  (y+1) }) e^{\beta (1+n)}& \mbox{if $y>-1$}
\end{array}\right.
\end{align}
\begin{align}\label{eqdob16.4}
 |\theta_{-1}(\sigma,1)|&=  e^{\beta (2d x +ky)} |1 -e^{\beta  (y-1) }| e^{-\beta  n}\nonumber\\
\mbox{}&=\left\{ \begin{array}{ll}
e^{\beta (2d x +ky)} (1 -e^{\beta  (y-1) }) e^{-\beta n}&\mbox{if $y\leq 1$}\\
e^{\beta (2d x +(k+1)y)} (1 -e^{-\beta  (y-1) }) e^{-\beta  (1+n)}&\mbox{if $y> 1$.}\\
\end{array}\right.
\end{align}

\indent If $y\geq 1$, since $1-e^{-\beta(y-1)}\leq 1-e^{-\beta(y+1)}$, then  from (\ref{eqdob16.3}) and (\ref{eqdob16.4}), we have
\begin{align}\label{eqjuly15.2}
 \sum_{s=\pm 1}|\theta_s(\sigma,1)|&\leq 2e^{\beta(2dx+(k+1)y)}(1-e^{-\beta(y+1)})\cosh(\beta(n+1))\nonumber\\
 \mbox{}&\leq
2e^{\beta(2dx+(k+1)(y+1))}(1-e^{-\beta(y+1)}).
\end{align}
\indent If $-1<y<1$, since $1-e^{-\beta(y+1)},1-e^{\beta(y-1)}\leq 1-e^{-2\beta}$ and  $y+1\geq 0$, which implies  $(k+1)y+1\geq ky$,  then  from (\ref{eqdob16.3}) and (\ref{eqdob16.4}), we have
\begin{align}\label{eqjuly15.3}
 \sum_{s=\pm 1}|\theta_s(\sigma,1)|&\leq 2e^{\beta(2dx+(k+1)y+1)}(1-e^{-2\beta})\cosh(\beta n)\nonumber\\
 \mbox{}& \leq
2e^{\beta(2dx+(k+1)(y+1))}(1-e^{-2\beta}).
\end{align}
\indent If $y\leq -1$, since $1-e^{\beta(y+1)}\leq 1-e^{\beta(y-1)}$, then  from (\ref{eqdob16.3}) and (\ref{eqdob16.4}), we have
\begin{align}\label{eqjuly15.4}
  \sum_{s=\pm 1}|\theta_s(\sigma,1)|&\leq 2e^{\beta(2dx+ky)}(1-e^{\beta(y-1)})\cosh(\beta n)\nonumber\\
  \mbox{}&\leq
2e^{\beta(2dx+k(y+1))}(1-e^{\beta(y-1)}).
\end{align}
Therefore, from (\ref{eqjuly15.2}), (\ref{eqjuly15.3}) and (\ref{eqjuly15.4}), we have
\begin{align}\label{eqjuly8z}
\sum_{s=\pm 1}|\theta_s(\sigma,1)|&\leq
\left\{
\begin{array}{ll}
2e^{\beta(2dx+(k+1)(y+1))}(1-e^{-\beta(y+1)})& \mbox{if $y\geq 1$}\\
2e^{\beta(2dx+(k+1)(y+1))}(1-e^{-2\beta})& \mbox{if $|y|<1$}\\
2e^{\beta(2dx+k(y+1))}(1-e^{\beta(y-1)})& \mbox{if $y\leq -1$}
\end{array}
\right..
\end{align}
\indent Next we will analyse $|\psi(\sigma,1)|$. First notice that $2\sinh \beta=e^{\beta}(1-e^{-2\beta}),$ for all $\beta$. Moreover,   for $y\geq 1$, we have   $1-e^{-2\beta}\leq 1-e^{-\beta(y+1)}$ and for $y\leq -1$, we have $1-e^{-2\beta}\leq 1-e^{\beta(y-1)},$ and so, from (\ref{eqdob16.2}), we have
\begin{align}\label{eqdob16.2w1}
|\psi(\sigma,1)|&=e^{\beta (4d x +(2k+1)y+1)}(1-e^{-2\beta})\nonumber\\
\mbox{}&\leq  \left\{\begin{array}{ll}
e^{\beta (4d x +(2k+1)y+1)} (1-e^{-\beta(y+1)}) & \mbox{if $y\geq 1$}\\
e^{\beta (4d x +(2k+1)y+1)}(1-e^{-2\beta}) & \mbox{if $|y|< 1$}\\
e^{\beta (4d x +(2k+1)y+1)} (1-e^{\beta(y-1)})& \mbox{if $y\leq -1$}
\end{array}\right..
\end{align}

\begin{lema}\label{lemdob16.1} Let $(x,y)\in A\cup B\cup C$,  where $A$, $B$ and $C$  are defined by (\ref{defa}), (\ref{defb}) and (\ref{defc}), respectively. Suppose that  $\sigma\equiv \tilde{\sigma} \mbox{ off $1$}$ and $|\sigma_1|\neq  |\tilde{\sigma}_1|$, then

\begin{align*}
d(\pi^{\{0\}}_{x,y,\beta}(.|\sigma),\pi^{\{0\}}_{x,y,\beta}(.|\tilde{\sigma}))&\leq &
\left\{
\begin{array}{ll}
3 e^{\beta (2dx+2d(y+1))} (1 -e^{-\beta  (y+1) })&\mbox{ if $(x,y)\in A$}\\
3 e^{\beta (2dx+2d(y+1))} (1-e^{-2\beta})& \mbox{ if $(x,y)\in B$}\\
3 e^{2d \beta x}   (1 -e^{\beta  (y-1) })& \mbox{ if $(x,y)\in C$}\\
\end{array}\right..
\end{align*}
\end{lema}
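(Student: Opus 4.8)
The plan is to feed the uniform bounds already obtained in this subsection into the general estimate of Lemma \ref{lem17.x}. Since we are in the case $\sigma_1=0$, $\tilde{\sigma}_1=1$, we have $\sigma^2=k$ with $k=\sum_{j=2}^{2d}\sigma_j^2\in\{0,1,\dots,2d-1\}$, and Lemma \ref{lem17.x} gives the starting inequality $d(\pi^{\{0\}}_{x,y,\beta}(.|\sigma),\pi^{\{0\}}_{x,y,\beta}(.|\tilde{\sigma}))\le \sum_{s=\pm1}|\theta_s(\sigma,1)|+|\psi(\sigma,1)|$. I would simply add the bound (\ref{eqjuly8z}) for $\sum_{s}|\theta_s(\sigma,1)|$ to the bound (\ref{eqdob16.2w1}) for $|\psi(\sigma,1)|$, treating the three regions $A$ ($y\ge1$), $B$ ($|y|<1$) and $C$ ($y\le-1$) separately, because in each region the two bounds share a common nonnegative factor: $1-e^{-\beta(y+1)}$ on $A$, $1-e^{-2\beta}$ on $B$, and $1-e^{\beta(y-1)}$ on $C$. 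After factoring this out, the task reduces to showing that the remaining bracket, a sum of two exponentials, is dominated by $3$ times a single target exponential, uniformly in $k$.

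Concretely, on $A$ the bracket is $2e^{\beta(2dx+(k+1)(y+1))}+e^{\beta(4dx+(2k+1)y+1)}$ and I want it $\le 3e^{\beta(2dx+2d(y+1))}$; on $B$ the target exponent is the same, and on $C$ the bracket is $2e^{\beta(2dx+k(y+1))}+e^{\beta(4dx+(2k+1)y+1)}$ with target exponent $2dx$. The natural route is to show that each of the two exponents in the bracket is $\le$ the target exponent for every $k\in\{0,\dots,2d-1\}$; then the bracket is at most $(2+1)e^{\beta(\mathrm{target})}$ and the lemma follows upon reinstating the common factor. The first (i.e. $\theta$) exponent is immediate from monotonicity in $k$: on $A$ and $B$ one has $y+1>0$ and $k+1\le 2d$, so $(k+1)(y+1)\le 2d(y+1)$, while on $C$ one has $y+1\le 0$ and $k\ge 0$, so $k(y+1)\le 0$.

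The main work, and the only delicate point, is the second (i.e. $\psi$) exponent. Here I would form the difference ``second exponent minus target exponent'' and bound it with the region constraints. On $A$ and $B$ the constraint $x+y+1<0$ gives $2dx<-2d(y+1)$; substituting this bounds the difference above by $(2k+1-4d)y+1-4d$, whose coefficient of $y$ is $\le-1<0$, so for $y>-1$ this is $<-2k\le0$, giving the second exponent $\le 2dx+2d(y+1)$. On $C$ the constraint $x+y+1<0$ is in fact redundant, being implied by $x<0$ and $y\le-1$, so I would use only $x<0$ and $y\le-1$: then $2dx<0$ and $(2k+1)y+1\le-2k\le0$, which forces the second exponent $\le 2dx$. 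Combining the two exponent comparisons in each region and multiplying back the common factor yields exactly the three bounds in the statement. The subtlety to watch throughout is the sign of the coefficient of $y$, which varies across the three regions and dictates both the case split and the direction in which the monotonicity-in-$k$ and the substitution of the constraint must be applied.
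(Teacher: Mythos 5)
Your proposal is correct and follows essentially the same route as the paper: both proofs feed the uniform bounds (\ref{eqjuly8z}) and (\ref{eqdob16.2w1}) into Lemma \ref{lem17.x}, factor out the common temperature-dependent factor in each of the three regions, and reduce to comparing exponents using $x+y+1<0$, $x<0$, the sign of $y+1$, and $k+1\le 2d$. The only (immaterial) difference is organizational: the paper factors out the $\theta$-term and shows the $\psi$-to-$\theta$ ratio is at most $1$ before bounding in $k$, whereas you compare each exponent directly to the target; the resulting constant $3=2+1$ and the final bounds are identical.
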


 \begin{proof}  Suppose first that $(x,y)\in A$, then $y\geq 1$. Therefore, from (\ref{eqdob15.2}), (\ref{eqdob16.2w1}) and  (\ref{eqjuly8z}), since $2dx+k(y-1)<x<0$ in $A\cup B\cup C$, $y+1\geq 0$ and  $k+1\leq 2d$, we have
\begin{align*}
\mbox{}&\mbox{} d(\pi^{\{0\}}_{x,y,\beta}(.|\sigma),\pi^{\{0\}}_{x,y,\beta}(.|\tilde{\sigma}))\\
\mbox{}&\leq ( e^{\beta (4d x +(2k+1)y+1)}+2e^{\beta(2dx+(k+1)(y+1))}) (1-e^{-\beta(y+1)})
\\
\mbox{}&=   e^{\beta (2dx+(k+1)(y+1))}(e^{\beta(2dx+k(y-1))}+2 ) (1 -e^{-\beta  (y+1) })
\\
\mbox{}&\leq 3 e^{\beta (2dx+(k+1)(y+1))} (1 -e^{-\beta  (y+1) })\\ &\leq  3 e^{\beta (2dx+2d(y+1))} (1 -e^{-\beta  (y+1) }).\end{align*}
\indent Suppose  that $(x,y)\in B$, then   $|y|<1$. Therefore, from (\ref{eqdob15.2}), (\ref{eqdob16.2w1}) and  (\ref{eqjuly8z}), proceeding exactly as we did in the previous case,  we have
\begin{align*}
d(\pi^{\{0\}}_{x,y,\beta}(.|\sigma),\pi^{\{0\}}_{x,y,\beta}(.|\tilde{\sigma}))&\leq 3 e^{\beta (2dx+2d(y+1))} (1-e^{-2\beta}).
\end{align*}
\indent Suppose that $(x,y)\in C$, then $y<-1$. Therefore, from  (\ref{eqdob15.2}), (\ref{eqdob16.2w1}) and  (\ref{eqjuly8z}),  since  $2dx+k(y-1)+y+1<0$ and $k(y+1)\leq  0$, we have
\begin{align*}
d(\pi^{\{0\}}_{x,y,\beta}(.|\sigma),\pi^{\{0\}}_{x,y,\beta}(.|\tilde{\sigma}))&\leq   (e^{\beta ((4d x +(2k+1)y+1)}+2e^{\beta(2dx+k(y+1))})(1-e^{\beta(y-1)}))
 \\
 \mbox{}&=  e^{\beta(2dx+k(y+1))}   (e^{\beta (2dx+k(y-1)+y+1)}+2)(1 -e^{\beta  (y-1) })
 \\
 \mbox{}&\leq 3 e^{\beta(2dx+k(y+1))}   (1 -e^{\beta  (y-1) })\\
& \leq  3 e^{2d \beta x}   (1 -e^{\beta  (y-1) }),\end{align*}
and this concludes the proof of the lemma.  \end{proof}

\section{Concluding Remarks}\label{seccr}

Even though from Theorem \ref{teo0} for any $(x,y)\in {\mathcal U}$  we can get uniqueness of the Gibbs state for both low and high temperatures, in order to get uniqueness for all temperature in Theorem \ref{teo1} we need to take $x$ sufficiently large, depending on $d$ and $y$.  Besides, we can show that for parameters $(0,y)$, where $y<-1$,  we cannot satisfy the condition (\ref{eqdobc16.2}) for low temperature. This means that there is no hope to reach those values of  parameters $(x,y)$ where the mean-field  calculations of \cite{bib:berker1}  and  the numerical renormalization group calculations  \cite{bib:osorio,bib:branco} predictions are  in disagreement, using Dobrushin (or even Dobrushin-Shlosman) criterion. In any case,  Theorem \ref{teo1} rules out the possibility of a reentrant behavior for $(x,y)\in{\mathcal U}^d_{Dob}$.

\begin{thebibliography}{1}
\bibitem{bib:rbeg}
 Blume, M.,  Emery, V. J., Griffiths, R. B.: \textit{Ising model for the $\lambda$ transition and phase separation in
$He^3-He^4$ mixtures},  Phys. Rev. A, {\bf 4}, number 3,  1071-1077 (1971).
\bibitem{bib:muk}
Mukamel, D., Blume, M.: \textit{Ising model for tricritical points in ternary mixtures}, Phys. Rev. A, {\bf 10}, 610-617 (1974).

\bibitem{bib:fur}
Furman, D., Duttagupta, S.,  Griffiths, R. B.: \textit{ Global phase diagram for a three-component model}, Phys. Rev. B, {\bf 15}, 441-464 (1977).



\bibitem{bib:grif}
Griffiths, R. B.: \textit{First-order phase transitions in spin-one Ising systems},  Physica {\bf 33}, 689-690 (1967) and references there in.

\bibitem{bib:blume}
Sivardiere, J.,  Blume, M.: \textit{Dipolar and quadrupolar ordering in $S=3/2$ Ising systems}, Phys. Rev. B, {\bf 5}, 1126-1134 (1972).




\bibitem{bib:ss}
 Schick, M.,  Shih, W.: \textit{Spin-1 model of a microemulsion}, Phys. Rev. B,  {\bf 34}, 1797-1801 (1986).


 \bibitem{bib:lss}
 Lajzerowicz, J., Sivardière, J.:  \textit{Spin-1 lattice-gas model. I. Condensation and solidification of a simple fluid}, Phys. Rev. A {\bf 11}, 2079 – 2089 (1975).


 \bibitem{bib:pgm}
 Braga, G. A., Lima, P. C., O'Carroll, M. L., \textit{Low temperature properties of the Blume-Emery-Griffiths (BEG) Model in the region with an finite number of ground state configurations}, Reviews in Mathematical Physics, Vol 12, No. 6, 779-806 (2000).




  \bibitem{bib:eu2018}
 Lima, P. C.: \textit{The BEG model in the disordered region and at the antiquadrupolar-disordered line of parameters}, to appear in Journal of Statistical Physics (2019).

 \bibitem{bib:gsuto}
Gruber, C., Suto, A.: \textit{Phase diagrams of lattice systems with residual entropy}, J. Stat. Phys., {\bf 52}, 113-141 (1988).

 \bibitem{bib:sinai}
 Sinai, Y. G. :  \textit{Theory of phase transition: rigorous results}, Pegamon Press (1982).

\bibitem{bib:sv}
 Friedli, S., Velenik, Y.: \textit{Statistical mechanics of lattice systems: a concrete mathematical introduction}, Cambridge University Press (2017).



 \bibitem{bib:berker1}
 Hoston, W., Berker, A. N.: \textit{Multicritical phase diagrams of the Blume-Emery-Griffiths model with repusilve biquadratic coupling}, Phys. Rev Lett. {\bf 67}, 1027-1030 (1991).

 \bibitem{bib:osorio}
 Osorio, R., Oliveira, M. J., Salinas, S. R.: \textit{The Blume-Emery-Griffiths model on a Bethe lattice: bicritical line and re-entrant behaviour}, J. Phys.: Condensed Matter, {\bf 1}, 6887-6892 (1989).

 \bibitem{bib:branco}
 Branco, N. S.: \textit{Blume-Emery-Griffiths model on the square lattice with repulsive biquadratic coupling}, Physica A, {\bf 232}, 477-486 (1996).


\bibitem{bib:dc}
Dobrushin, R. L.: \textit{\em The description of the random field by means of conditional probabilities and conditions of its regularity}. Theory Probab. Appl. {\bf 13}, 197-224 (1968).

\bibitem{bib:geor}
Georgii, Hans-Otto: \textit{Gibbs Measures and Phase transitions}, De Gruyter, Second Edition.


\end{thebibliography}
\end{document}